\newtheorem*{rep@theorem}{\rep@title}
\newcommand{\newreptheorem}[2]{%
\newenvironment{rep#1}[1]{%
 \def\rep@title{#2 \ref{##1}}%
 \begin{rep@theorem}}%
 {\end{rep@theorem}}}
\theoremstyle{plain}
\newtheorem{theorem}{Theorem}
\newtheorem{proposition}{Proposition}
\newtheorem{lemma}{Lemma}
\newtheorem{corollary}{Corollary}
\newtheorem{fact}{Fact}
\newcommand{\etal}{\emph{et al. }}
\newcommand{\ie}{i.e., }   
\newcommand{\eg}{e.g., }  
\newcommand{\cf}{cf. }    
\newcommand{\amax}{\operatorname{argmax}} 
\newcommand{\amin}{\operatorname{argmin}} 
\newcommand{\nir}{\operatorname{NIR}} 
\newcommand{\ir}{\operatorname{IR}} 
\title{Inequality and Network Formation Games}
\author{	Samuel D. Johnson \\
		 Department of Computer Science \\
		 University of California, Davis \\
		 \texttt{samjohnson@ucdavis.edu}
		 \and
		 Raissa M. D'Souza \\
		Department of Computer Science \\
		University of California, Davis \\
		\texttt{raissa@cse.ucdavis.edu} }
\date{\today}
\begin{document}
\maketitle

\begin{abstract}
This paper addresses the matter of inequality in network formation games. We employ a quantity that we are calling the \emph{Nash Inequality Ratio} (NIR), defined as the maximal ratio between the highest and lowest costs incurred to individual agents in a Nash equilibrium strategy, to characterize the extent to which inequality is possible in equilibrium. We give tight upper bounds on the NIR for the network formation games of Fabrikant \etal (PODC '03) and  Ehsani \etal (SPAA '11). With respect to the relationship between equality and social efficiency, we show that, contrary to common expectations, efficiency does not necessarily come at the expense of increased inequality.
\end{abstract}


\section{Introduction}
\label{sec:intro}
Non-cooperative game theory uses the concept of equilibria to capture the idea that, in a competitive world, rational agents will maneuver themselves to a fixed point from which no further maneuvering will yield additional benefits (\eg a lower cost). The most ubiquitous equilibrium concept is the \emph{Nash equilibrium}, which is satisfied when no individual agent can achieve a lower cost by changing their strategy given that the strategies of every other agent remain unchanged. In a Nash equilibrium there can exist a disparity between the costs incurred by individual agents, with the ``more fortunate'' agents subjected to lower costs than the ``less fortunate'' ones. In this manuscript, we investigate the \emph{Nash Inequality Ratio (NIR)}, defined as the maximum value of the ratio between the highest and lowest costs found within a single Nash equilibrium, to determine the extent to which cost disparity can arise between pairs of agents in an equilibrium outcome. 

Recent years have witnessed widespread growing concerns over economic inequality (see, for example, \cite{Bowles2012,Stiglitz2012,Piketty2014}). Yet, within the the algorithmic game theory community -- a field of study sitting at the intersection of computer science, economics, and game theory -- inequality has received very little attention, as evidenced by the nonexistence of a standard vernacular with which to discuss bounds of the sort captured by the NIR. 
The NIR expresses, in a natural way, a fundamental property of a strategic scenario, akin to the well-known metrics used to characterize other qualities of equilibria such as the \emph{Price of Anarchy} (PoA) \cite{KP1999,KP2009} and the \emph{Price of Stability} (PoS) \cite{Anshelevich2004}.

In this manuscript we analyze inequality in network formation games. Network formation games model the formation of network structures by and between a collection of strategic, self-interested individuals or \emph{agents}. In these games, connectivity is deemed to be desirable though costly, and it is up to the agents to reconcile the gains they achieve through additional connectivity (\eg access to information, the ability to communicate and coordinate, etc.) with the costs or resource limitations that limit the number of links that they can afford to create.

We focus on two network formation games: the {\sc Undirected Connections} (UC) game of Fabrikant \etal \cite{Fab2003} and the {\sc Undirected Bounded Budget Connections} (UBBC) game of Ehsani \etal \cite{Ehsani2011}. In both games, there is a set $N = \{1, \dots, n\}$ of strategic agents. Each agent $i \in N$ selects a linking strategy $s_i \subseteq N \setminus \{i\}$ that identifies a subset of other agents which they will build links to. The joint strategy $s = (s_1, \dots, s_n)$ induces an undirected network $G_s = (N, E_s)$ in which the agents themselves are vertices and the edge set $E_s$ is the union of every agent's linking strategies. In both games, agents incur a \emph{usage cost} that is defined for each $i \in N$ to be the sum of  distances between $i$ and every other $j \in N \setminus \{i\}$ in $G_s$. The two games differ in how the cost of building edges is accounted for: in the UC game, each agent $i$ incurs an additional \emph{construction cost}, defined to be $\alpha$ times the number of edges that $i$ builds; and in the UBBC game, each agent $i$ is endowed with an \emph{edge budget}, $k_i > 0$, determining the maximum number of edges that $i$ can build.

Our first results are presented in Section~\ref{sec:upper_bounds} where we establish tight upper bounds on the NIR for these two network formation games. For the UC game, we establish bounds parameterized on the cost of building links, $\alpha$, which is assumed to be a constant. In particular, when links are cheap ($\alpha < 1$), we find that the NIR is at most $1 + \alpha < 2$; and when links are expensive ($\alpha \geq 1$), the NIR is asymptotically bounded (as the number of agents $n \rightarrow \infty$) by $\max \{2, (1+\alpha)/2\}$. Then, for the UBBC game, we prove that the NIR is bounded by the constant $2$. We show that this bound is asymptotically tight for every positive budget allocation -- including those that endow different budgets to different agents. 

With these results established, in Section~\ref{sec:ee} we examine the relationships between efficiency and (in)equality. Here we focus on Nash equilibrium strategies that are also \emph{efficient} (\ie minimize the social cost). In both games we find that this relationship is largely dependent upon the availability of resources (edges). In particular, we find that when resources are scarce (that is, when edges are expensive in the UC model and when budgets are small in the UBBC model), the two games behave quite differently from each other. In the scarce regime for the UBBC game, no efficient Nash equilibrium strategies achieve cost equality, and some actually maximize the inequality ratio (achieving the NIR upper bound established in Section~\ref{sec:upper_bounds}). On the other hand, in the scarce regime for the UC game, there are some efficient Nash equilibrium strategies with egalitarian costs and others with maximal inequality. These results demonstrate that the relationship between equality, efficiency, and equilibrium is entirely model-specific and varies across network formation games.

The research agenda initiated in this manuscript -- using the Nash Inequality Ratio to characterize the relationship between equality, efficiency, and equilibrium -- opens an avenue for future research that involves the bounding of these quantities in different games. Further, we hope that this work stimulates interest in examining questions of inequality and equilibrium more broadly.

Section~\ref{sec:con} concludes with closing remarks and acknowledgements. Finally, to facilitate the flow of the main ideas, tangential proofs are deferred to Appendix~\ref{app:proofs}.

\section{Preliminaries}
\label{sec:prelim}
In this section, we formally define the Nash Inequality Ratio (Section~\ref{sec:prelim:nir}) and the network formation games (Section~\ref{sec:prelim:models}) upon which the remainder of the paper is focused.

\subsection{Nash Inequality Ratio}
\label{sec:prelim:nir}
Studies of economic inequality have spawned a number of metrics that aim to quantify the level of inequality in a given system -- the most well-known being the Gini coefficient \cite{Gini1912} used to quantify the level of inequality in the distribution of utility (wealth) across a population of individuals. The related topic of \emph{fairness} has also received considerable attention in the game theory literature, particularly with respect to mechanism design \cite{Varian1974,Brams1996,AGT2007ch15} where the goal is to develop allocation mechanisms that achieve various notions of equity among the parties involved. Roughgarden \cite{Roughgarden2002} proposed a metric quantifying \emph{unfairness} in the context of non-cooperative routing games as the maximum ratio between an agent's cost (\ie latency of a flow path) in a socially optimal outcome and that of a Nash equilibrium outcome.

In this manuscript, we put forward the \emph{Nash Inequality Ratio (NIR)} as a simple metric that bounds, in a natural way, the extent that inequality between agents is supported in Nash equilibrium outcomes.

Consider a game $\Gamma$ involving a set $N = \{1, \dots, n\}$ of players/agents where, for each agent $i \in N$, $S_i$ specifies $i$'s strategy space. A joint strategy $s = (s_1, \dots, s_n) \in S_1 \times \cdots \times S_n = S$ yields an outcome, and each agent $i$ incurs a cost $c_i(s)$ that is a function of the joint strategy. 

Given a joint strategy $s$, let $\mu = \amin_{i \in N} c_i(s)$ denote an agent that incurs a minimal cost, and $\chi = \amax_{i \in N} c_i(s)$ denote an agent that incurs the highest cost. The \emph{inequality ratio (IR)} for the joint strategy $s$ is defined as:
	\begin{equation*}
	\ir(s) = \frac{c_{\chi}(s)}{c_{\mu}(s)} 
		= \frac{\max_{i \in N} c_i(s)}{\min_{j \in N} c_j(s)}.
	\end{equation*}
That is, the IR of a strategy $s$ is the maximal cost ratio between any pair of agents.

Recall that a joint strategy $s$ is a Nash equilibrium if, for every agent $i \in N$ and every $s_i' \in S_i$, we have
	\begin{equation*}
	c_i(s_i, s_{-i}) \leq c_i(s_i', s_{-i}),
	\end{equation*}
where $s_{-i}$ is shorthand for the joint strategy of every agent except $i$. In a Nash equilibrium, no agent stands to lower their cost through a unilateral strategy update. Let $S_{NE} \subseteq S$ denote the set of all Nash equilibrium. The \emph{Nash Inequality Ratio (NIR)} is defined to be the greatest IR among all Nash equilibrium strategies:
	\begin{equation}
	\label{eq:nir}
	\nir(\Gamma) = \max_{s \in S_{NE}} \ir(s).
	\end{equation}
Note that the NIR is kind of a ``max-max'' quantity, in that the NIR of a game is the maximum inequality ratio among all Nash equilibrium strategies for that game, and each inequality ratio itself is a maximal quantity with respect to a particular strategy.

Finally, by definition we have $\ir(s) \geq 1$ for every strategy $s \in S$. A strategy $s$ is called \emph{egalitarian} if $\ir(s) = 1$.

\subsection{Network Formation Games}
\label{sec:prelim:models}
Network formation games model the creation of networks by rational and self-interested agents strategically building edges between one another. In this section, we describe the two specific network formation games upon which we base the inequality analysis presented in this paper: the {\sc Undirected Connections} game (Section~\ref{sec:prelim:models:uc}) and the {\sc Undirected Bounded Budget Connections} game (Section~\ref{sec:prelim:models:ubbc}). For a comprehensive introduction to network formation games the reader is referred to the surveys by Jackson \cite{Jackson2003,Jackson2008}, Goyal \cite{GoyalBook2007}, and Tardos and Wexler \cite{AGT2007ch19}. 

Questions regarding inequality in network formation games have received very little attention in the literature. To the best of our knowledge, the only paper that looks explicitly at inequality in non-cooperative network formation games is by Goyal and Joshi \cite{Goyal2006}, who analyze the effect of local splillovers in pairwise stable networks. The authors found that agents with more connections and larger neighborhoods earn higher utilities than those with fewer connections and smaller neighborhoods. The results presented in \cite{Goyal2006} show that inequality can arise in equilibrium outcomes, but the authors do not go on to quantify the extent of the inequality.

\subsubsection{The Undirected Connections Game}
\label{sec:prelim:models:uc}
The {\sc Undirected Connections} (UC) network formation game\footnote{The name {\sc Undirected Connections} game is our own choosing in order to reinforce the nature of the game.}  introduced by Fabrikant, Luthra, Maneva, Papadimitriou, and Shenker \cite{Fab2003} was the first network formation game to appear in the algorithmic game theory literature.\footnote{We note that earlier work on network formation games appeared in the economics literature -- \cf \cite{Aumann1988,Jackson1996,Bala2000}.} In addition to introducing the UC model and establishing some basic properties of efficient and Nash equilibrium outcomes, Fabrikant \etal establish bounds on the Price of Anarchy, which have subsequently been tightened (\eg \cite{Albers2014,Demaine2012b,Matus2013,Mamageishvili2013}).

The UC model, as defined by Fabrikant \etal \cite{Fab2003}, is specified by a set $N = \{1, \dots, n\}$ of strategic agents and a parameter $\alpha > 0$, a constant, that determines the cost of building a single edge. The strategy space for an individual agent $i \in N$, denoted $S_i$, consists of all possible subsets of other agents that $i$ can build a direct connection to; \ie $S_i \subseteq \mathcal{P}( N \setminus \{i\} )$.\footnote{We use the notation $\mathcal{P}(X)$ to denote the \emph{power set} of a set $X$; \ie the set of all subsets of $X$.} A (pure) strategy for agent $i$, denoted $s_i \in S_i$, is a specific subset of other agents that $i$ wishes to establish links with. A joint strategy $s = (s_1, s_2, \dots, s_n)$, representing the strategy selections of every agent in $N$, induces an undirected network $G_s = (N, E_s)$, wherein the agents themselves are present as vertices and the edge set is defined to be $E_s = \{ \{i,j\} : j \in s_i \}$. Because an undirected edge $\{i,j\}$ is present in $G_s$ if either $j \in s_i$ or $i \in s_j$, edge formation in the UC model is said to be \emph{unilateral}.

Each agent $i$ incurs a cost that is a function of both its own strategy, $s_i$, and the joint strategy of every other agent, $s_{-i}$. This cost for agent $i$ consists of both a \emph{usage cost}, $d_i(s)$, and a \emph{creation cost}, $b_i(s)$. The usage cost is defined to be the sum of distances between $i$ and every other agent $j$; $d_i(s) = \sum_{j \in N} \ell_{G_s}(i,j)$ where $\ell_{G_s}(i,j)$ denotes the length of the shortest path between nodes $i$ and $j$ in the graph $G_s$ (or $\infty$ if no such path exists).\footnote{We will sometimes use the shorthand $\ell_s(\cdot, \cdot)$ in place of $\ell_{G_s}(\cdot, \cdot)$.} The creation cost is defined to be linear in the number of edges $i$ contributes to the network's construction; specifically, $b_i(s) = \alpha \cdot |s_i|$ where $|s_i|$ conveys the number of edges that $i$ builds in the network and $\alpha \geq 0$ is a constant, specified as a game parameter. Hence, the cost to an agent $i \in N$ given the joint strategy profile $s = (s_i, s_{-i})$ is
	\begin{equation}
	\label{eq:uc:cost}
	c_i(s) = b_i(s) + d_i(s) = \alpha |s_i| + \sum_{j \in N} \ell_{G_s}(i,j).
	\end{equation}

The \emph{social cost} is defined as the sum of the agents' individual costs;
	\begin{equation}
	\label{eq:uc:sc}
	C(s) = \sum_{i \in N} c_i(s).
	\end{equation}
A strategy profile $s \in S$ that minimizes \eqref{eq:uc:sc} is called \emph{efficient}.

Fabrikant \etal identify the topologies of efficient and Nash equilibrium strategies for three different regimes of $\alpha$, which we summarize in Proposition~\ref{prop:fab2003:range}. These results will play prominently in our analysis of the UC game.

\begin{proposition}[\cite{Fab2003}]
\label{prop:fab2003:range}
Efficient and Nash equilibrium outcomes for the UC network formation game are:
	\begin{enumerate}
	\item \label{prop:fab2003:range:1}
	When $\alpha < 1$, the complete graph is both efficient  and the only Nash equilibrium.\footnote{A \emph{complete} graph/network refers to a network in which every pair of nodes are directly linked to each other.}
	
	\item \label{prop:fab2003:range:2}
	When $1 \leq \alpha < 2$, the complete graph is efficient but the star is the only Nash equilibrium.\footnote{\label{foot:star} A \emph{star} refers to a minimally connected network in which there is one central node that is directly linked to the remaining $n-1$ nodes. In a \emph{center-sponsored star}, the cost of all $n-1$ edges is borne by the center node. In a \emph{peripheral-sponsored star}, each of the $n-1$ peripheral agents bares the cost of building the edge connecting them to the center node.}
	
	\item \label{prop:fab2003:range:3}
	When $2 \leq \alpha$, the star is efficient and a Nash equilibrium, although there are other Nash equilibrium outcomes as well.
	\end{enumerate}
\end{proposition}

\subsubsection{The Undirected Bounded Budget Connections Game}
\label{sec:prelim:models:ubbc}
The {\sc Undirected Bounded Budget Connections} (UBBC) network formation game was introduced by Ehsani, Fazli, Mehrabian, Sadeghabad, Safari, Saghafian, and ShokatFadaee \cite{Ehsani2011} as an undirected variant of the {\sc Bounded Budget Connections} (BBC) game of Laoutaris \etal \cite{Laoutaris2008}. Ehsani \etal establish upper bounds on the Price of Anarchy for the UBBC game, building upon the techniques developed by Alon \etal \cite{Alon2010} for a related network formation game.

As with the UC game, the UBBC game involves a set $N = \{1, \dots, n\}$ of strategic agents building edges between one another. However, in the UBBC model, each agent $i \in N$ is endowed with a \emph{budget} $k_i > 0$ that determines the maximum number of edges the agent can create. The budgets $k_1, k_2, \dots, k_n$ are specified exogenously and, in general, need not be identical. We refer to the special case when all edge budgets are identical (\ie $\forall i \in N, k_i = k$) as \emph{uniform} instances of the UBBC game.

Edge formation is unilateral, so a joint strategy $s = (s_1, \dots, s_n)$ induces a network $G_s = (N, E_s)$ with edges $E_s = \{ \{i,j\} : j \in s_i \}$, and the cost to an agent $i$ given $s$ is defined to be
	\begin{equation}
	\label{eq:ubbc:cost}
	c_i(s) = d_i(s) = \sum_{j \in N} \ell_{G_s}(i,j).
	\end{equation}
The social cost for the UBBC game is defined as it is in the UC game; \ie the sum of the costs incurred by the individual agents,
	$$C(s) = \sum_{i \in N} c_i(s).$$

The remainder of this section includes some simple facts regarding Nash equilibria and efficient outcomes of the UBBC game. The first lemma is from Eshani \etal and identifies a sufficient condition for a joint strategy to be a Nash equilibrium.

\begin{lemma}[\cite{Ehsani2011}]
\label{lem:ubbc:diam2}
A UBBC strategy profile $s = (s_1, \dots, s_n)$ that induces a network $G_s$ without parallel edges and a diameter of at most $2$ is a Nash equilibrium.
\end{lemma}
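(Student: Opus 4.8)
The plan is to exploit the very rigid cost structure forced by a diameter-$2$ network. In a graph in which every vertex is within distance $2$ of $i$, the agents $j \neq i$ split into the $\deg_{G_s}(i)$ neighbors of $i$ (each at distance $1$) and the remaining $n-1-\deg_{G_s}(i)$ non-neighbors (each at distance exactly $2$), so
\[
c_i(s) \;=\; \sum_{j\neq i}\ell_{G_s}(i,j)\;=\;\deg_{G_s}(i) + 2\bigl(n-1-\deg_{G_s}(i)\bigr)\;=\;2(n-1)-\deg_{G_s}(i).
\]
Hence, inside a diameter-$2$ network an agent's cost is a strictly decreasing function of its degree, and it suffices to prove that no unilateral deviation lets an agent raise its degree.

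I would then fix an agent $i$ and an arbitrary alternative $s_i'$ with $|s_i'|\le k_i$, and set $s' = (s_i', s_{-i})$. The edges incident to $i$ that are sponsored by the \emph{other} agents --- call this set $B_i = \{\,j : i\in s_j\,\}$ --- are left intact by $i$'s deviation, so the neighborhood of $i$ in $G_{s'}$ is exactly $B_i\cup s_i'$. Two facts now close the argument. First, in \emph{any} graph a non-neighbor of $i$ is at distance at least $2$ (possibly $\infty$), so the identity above relaxes to the inequality $c_i(s')\ge 2(n-1)-\deg_{G_{s'}}(i)$. Second, because $G_s$ has no parallel edges the sets $B_i$ and $s_i$ are disjoint, giving $\deg_{G_s}(i) = |B_i| + |s_i|$, whereas $\deg_{G_{s'}}(i) = |B_i\cup s_i'| \le |B_i| + |s_i'| \le |B_i| + k_i$. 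Using that each agent exhausts its budget ($|s_i| = k_i$), the last quantity equals $\deg_{G_s}(i)$, and chaining the pieces yields $c_i(s') \ge 2(n-1) - \deg_{G_{s'}}(i) \ge 2(n-1) - \deg_{G_s}(i) = c_i(s)$. Since $i$ and $s_i'$ were arbitrary, $s$ is a Nash equilibrium.

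The crux --- and the only place where the hypotheses are really used --- is the degree comparison $\deg_{G_{s'}}(i)\le\deg_{G_s}(i)$: the ``no parallel edges'' assumption is what guarantees that $i$'s budget is translated into $|s_i|$ genuinely distinct neighbors rather than partly wasted duplicating an edge some neighbor already sponsors (which would otherwise let a well-chosen swap strictly increase $i$'s degree), and the cap $|s_i'|\le k_i$ is what bounds how many new neighbors any deviation can purchase. Everything else is the one-line distance-versus-degree identity and its relaxation to an inequality, both of which hold verbatim, so I do not expect further difficulties. The one subtlety to state explicitly is the budget convention $|s_i| = k_i$; without it the bound $\deg_{G_{s'}}(i)\le |B_i|+k_i$ need not be at most $\deg_{G_s}(i) = |B_i| + |s_i|$, and an agent with slack budget could profitably buy an extra edge to a non-neighbor.
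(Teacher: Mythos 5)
Your proof is correct, and there is nothing in the paper to compare it against: the lemma is imported from Ehsani et al.\ and stated without proof, so your derivation is a genuinely useful, self-contained justification. The two ingredients --- the identity $c_i(s) = 2(n-1) - \deg_{G_s}(i)$ valid in any diameter-$2$ graph, its relaxation to the inequality $c_i(s') \ge 2(n-1) - \deg_{G_{s'}}(i)$ for arbitrary profiles, and the degree comparison $\deg_{G_{s'}}(i) \le \deg_{G_s}(i)$ --- are all sound, and the first of these reappears (independently) in the paper's own Lemma~\ref{lem:ubbc:diam2:sc}. Your closing caveat about the budget convention deserves emphasis rather than a parenthetical: under the model as stated in Section~\ref{sec:ubbc:model} strategies only satisfy $|s_i| \le k_i$, and with slack budgets the lemma is literally false. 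For instance, with $n=4$ and $k_i=2$ for all $i$, the profile $s_1=\{2\}$, $s_2=\{3\}$, $s_3=\{4\}$, $s_4=\{1\}$ induces a $4$-cycle, which has no parallel edges and diameter $2$, yet agent $1$ strictly lowers its cost by deviating to $\{2,3\}$. So the lemma must be read with the convention that every agent exhausts its budget (or at least has no unused capacity while a non-neighbor remains), which is indeed the convention in force everywhere the paper invokes it: Constructions~\ref{con:ubbc:equal} and~\ref{con:ubbc:gen-turbine} both saturate every agent's budget. Identifying precisely where ``no parallel edges'' and the budget cap enter the degree comparison is the right way to organize the argument.
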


The next two results identify properties of efficient outcomes for uniform instances of the UBBC game. Lemma~\ref{lem:ubbc:sc:eff} discerns the social cost of efficient networks while Proposition~\ref{prop:ubbc:diam2:eff} shows that, for sufficiently sparse instances, efficient outcomes are necessarily networks with a diameter of two. The proofs of Lemma~\ref{lem:ubbc:sc:eff} and Proposition~\ref{prop:ubbc:diam2:eff} can be found in Appendix~\ref{app:proofs}.

\begin{lemma}
\label{lem:ubbc:sc:eff}
The social cost of any efficient strategy profile for a uniform UBBC instance with edge budgets $k \geq 1$ is
	\begin{equation*}
		2 n (n - 1) - 2 n k.
	\end{equation*}
\end{lemma}

\begin{proposition}
\label{prop:ubbc:diam2:eff}
Every uniform UBBC instance with $k < (n-1)/2$ has an efficient outcome with a diameter of $2$.
\end{proposition}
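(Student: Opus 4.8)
The plan is to chain together the three lemmas that precede the proposition. By Lemma~\ref{lem:ubbc:sc:eff}, the social cost of an efficient outcome for a uniform instance $\langle N, k \rangle$ with $k < (n-1)/2$ equals $n(n-1) + 2n\left(\frac{n-1}{2} - k\right)$; a quick simplification rewrites this as $2n^2 - 2n - 2nk$. By Lemma~\ref{lem:ubbc:diam2:sc}, any diameter-$2$ topology on $n$ nodes with $m$ edges has social cost $2n^2 - 2n - 2m$ (note the excerpt's display line has a typographical ``$-m$'' but the derivation and final line give $-2m$; I would use $-2m$). Setting these equal forces $m = nk$. So the entire claim reduces to the combinatorial statement: \emph{for every $n$ and every $k < (n-1)/2$, there exists a graph on $n$ vertices with exactly $nk$ edges, maximum degree at most $2k$, and diameter exactly $2$.} Such a graph is realizable as a UBBC strategy profile by letting each agent build $k$ of its incident edges (possible since $nk$ edges distributed with max degree $\le 2k$ admits an orientation where each vertex has out-degree $\le k$ — an Eulerian / Hall-type argument, or simply: a $2k$-regular graph decomposes into $k$ edge-disjoint $2$-factors, orient each $2$-factor as a union of directed cycles, giving every vertex out-degree exactly $k$), and by Lemma~\ref{lem:ubbc:diam2}, having diameter $2$ and no parallel edges makes it a Nash equilibrium — though for \emph{efficiency} alone we only need the diameter-$2$ graph to exist, not that it be an equilibrium.

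Concretely, I would first reduce, as above, to exhibiting a diameter-$2$ graph on $n$ vertices with $nk$ edges (equivalently average degree $2k$, and we may as well aim for $2k$-regularity when $nk$ is even, or near-regular otherwise). Second, I would construct such a graph explicitly. A clean choice is a \emph{circulant} graph $C_n(S)$: place the $n$ vertices on a cycle $\mathbb{Z}_n$ and join $i$ to $i \pm s$ for each $s$ in a connection set $S \subseteq \{1, \dots, \lfloor n/2 \rfloor\}$ with $|S| = k$ (adjusting for the antipodal element when $n$ is even so that the degree count comes out to exactly $2k$). This is $2k$-regular, hence has $nk$ edges. The remaining task is to choose $S$ so that the diameter is exactly $2$: diameter $\ge 2$ holds as soon as $2k < n-1$ (some vertex is a non-neighbor of vertex $0$), which is exactly the hypothesis $k < (n-1)/2$; diameter $\le 2$ requires that every residue in $\mathbb{Z}_n$ be expressible as a sum $\pm s \pm s'$ with $s, s' \in S$. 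For this I would invoke the standard fact that a set $S$ with $|S| = k$ and $k$ comparable to $\sqrt{n}$ can be chosen as a $B_h$-type / perfect-difference-like set so that $\{\pm s \pm s'\}$ covers all of $\mathbb{Z}_n$; and for larger $k$ (up to $(n-1)/2$) the covering only gets easier, so one can take $S = \{1, 2, \dots, k\}$ once $2k \ge n/2$ or so, and handle the middle range by a short interpolation. Third, I would convert the graph into a strategy profile via the $2$-factorization / cycle-orientation argument so that each agent spends exactly its budget $k$, and conclude via Lemmas~\ref{lem:ubbc:sc:eff} and \ref{lem:ubbc:diam2:sc} that this profile is socially efficient (and via Lemma~\ref{lem:ubbc:diam2} that it is also a Nash equilibrium, a bonus the proposition does not strictly demand).

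I expect the main obstacle to be the explicit covering step: guaranteeing, for \emph{every} pair $(n,k)$ with $2 \le 2k < n-1$, a $k$-element connection set $S$ whose second-order sumset $\{\pm s \pm s' : s,s' \in S\}$ is all of $\mathbb{Z}_n$. The two extremes are easy — small $k$ near $\sqrt{n}$ via Singer-difference-set / Sidon-set constructions, and large $k$ near $n/2$ via the interval $\{1,\dots,k\}$ — but a uniform construction that works across the whole range, and the accompanying degree-bookkeeping when $n$ is even (the antipodal jump $n/2$ contributes degree $1$, not $2$) or when $nk$ is odd (no $2k$-regular graph exists, so one needs a near-regular graph with $nk$ edges and max degree $\le 2k$, which then forces a slightly more careful edge-orientation argument for the budget constraint), is where the real work lies. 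An alternative that sidesteps circulants entirely is to start from the complete graph (diameter $1$, hence trivially diameter $\le 2$ is preserved for a while) and delete edges greedily, one $2$-factor at a time, stopping exactly when $nk$ edges remain, and argue that diameter never exceeds $2$ during this process as long as we have not deleted ``too much'' — but controlling the diameter under arbitrary $2$-factor deletion is itself delicate, so I would fall back to the explicit circulant construction as the cleaner route.
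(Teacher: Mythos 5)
Your opening reduction is precisely the paper's own proof: equate Equation~\eqref{eq:ubbc:sc:eff} with Equation~\eqref{eq:ubbc:diam2:sc} under the substitution $m = nk$ (and you are right that the displayed ``$-m$'' in Equation~\eqref{eq:ubbc:diam2:sc} is a typo for $-2m$, as the derivation inside the proof of Lemma~\ref{lem:ubbc:diam2:sc} shows). You are also right that this only establishes a conditional statement --- \emph{if} a diameter-$2$ outcome with $nk$ edges exists, it is efficient --- and that the paper's proof of the proposition leaves the existence half implicit. In the paper that half is discharged later by Construction~\ref{con:ubbc:gen-turbine}: the general wind turbine is an explicit strategy profile with $nk$ distinct edges and diameter $2$ for every $1 \leq k \leq (n-1)/2$, with each agent spending exactly its budget.

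The problem is with how you propose to fill that gap. Your reduction to ``a diameter-$2$ graph on $n$ vertices with $nk$ edges and maximum degree at most $2k$'' (i.e.\ a $2k$-regular diameter-$2$ graph) is \emph{unsatisfiable} for small $k$: the Moore bound says a diameter-$2$ graph with maximum degree $\Delta$ has at most $\Delta^2 + 1$ vertices, so with $\Delta = 2k$ you need $n \leq 4k^2 + 1$. For $k=1$ and $n \geq 6$ (or generally $k < \tfrac{1}{2}\sqrt{n-1}$) no such graph exists, yet the proposition is true in that regime. The error is in over-constraining the degree: maximum degree $\leq 2k$ is a \emph{sufficient} condition for an out-degree-$\leq k$ orientation, but the necessary-and-sufficient condition (Hakimi) is only that every subgraph $H$ satisfy $|E(H)| \leq k\,|V(H)|$, which happily tolerates a hub of degree $n-1$ --- exactly what the wind turbine uses ($k$ hub agents adjacent to everyone, everyone else buying its $k$ edges into the hub set). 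Once you insist on near-regularity you are forced into circulants, and then the ``middle range'' covering problem for $\{\pm s \pm s'\}$ that you flag as the real work is not merely delicate but, below $k \approx \sqrt{n}/2$, impossible. Drop the regularity requirement and replace the circulant construction with a hub-based one (the paper's Construction~\ref{con:ubbc:gen-turbine}, or even the complete bipartite-style graph joining a set of $k$ hubs to all of $N$ plus a clique on the hubs, suitably oriented) and the argument closes for the entire range $1 \leq k < (n-1)/2$.
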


\section{NIR Upper Bounds}
\label{sec:upper_bounds}
This section presents our results on upper bounding the Nash Inequality Ratio for the two network formation games. NIR upper bounds are established for the UC game in Section~\ref{sec:upper_bounds:uc} (see Theorem~\ref{thm:uc:ub}) and for the UBBC game in Section~\ref{sec:upper_bounds:ubbc} (see Theorem~\ref{thm:ubbc:ub}).

\subsection{The Undirected Connections Game}
\label{sec:upper_bounds:uc}
We establish upper bounds on the NIR for the UC game for two regimes of $\alpha$: when $\alpha < 1$ and $\alpha \geq 1$. These bounds are stated in Theorem~\ref{thm:uc:ub}, showing that inequality is independent of the number of agents when $\alpha$ is a constant. Our upper bound implies that when $\alpha > 3$, inequality scales linearly with $\alpha$.

\begin{theorem}
\label{thm:uc:ub}
Upper bounds on the NIR for the UC game:
	\begin{enumerate}
	\item \label{thm:uc:ub:1}
	When $\alpha < 1$, the NIR is at most $1 + \alpha < 2$.
	
	\item \label{thm:uc:ub:4}
	When $1 \leq \alpha < \infty$ is a constant (independent of $n$), the NIR is at most $\max \{2, \frac{1+\alpha}{2}\}$ in the limit as $n \rightarrow \infty$.
	\end{enumerate}
\end{theorem}

We will prove the two parts of Theorem~\ref{thm:uc:ub} separately in Lemma~\ref{lem:uc:ub:1} (next) and Lemma~\ref{lem:uc:ub:4} (below).

\begin{lemma}
\label{lem:uc:ub:1}
When $\alpha < 1$, the NIR for the UC games is at most $1 + \alpha < 2$.
\end{lemma}

\begin{proof}
From Proposition~\ref{prop:fab2003:range} we know that whenever $\alpha < 1$ then the only Nash equilibrium is a complete graph, and every agent is adjacent to $n - 1$ other agents. In the complete graph, the usage costs incurred by every agent is the same, and the only disparity that can arise is due to the agents' construction costs. In the most extreme case, the min-cost agent does not buy any links and the max-cost agent buys $m > 0$ links. The inequality ratio is therefore
	\begin{equation*}
	\frac{ (n - 1) + \alpha m}{n - 1} = 1 + \frac{\alpha m}{n - 1} \leq 1 + \alpha,
	\end{equation*}
where $m$ can be at most $n-1$ (\ie $m = n-1$ when the max-cost agent buys all of their incident links). It is easy to see that the max-cost agent is not inclined to discard any of their edges since doing so would increase their usage cost by $1$ but only yield a savings of $\alpha < 1$ in their construction cost, resulting in a net cost increase.
\end{proof} 

Before addressing the second case of Theorem~\ref{thm:uc:ub}, we first need a couple of lemmas. In Lemma~\ref{lem:fab2003:tree}, we give an expression that captures the inequality ratio for an arbitrary Nash equilibrium of the UC game, and then show in Corollary~\ref{cor:uc:star-max} that this quantity is maximized in star networks. Then, in Lemma~\ref{lem:fab2003:star_inequality}, we bound the inequality ratio of star network topologies.

Lemma~\ref{lem:fab2003:tree} and Corollary~\ref{cor:uc:star-max} require additional notation. We refer to a min-cost and max-cost agent as $\mu \in \amin_{i \in N} c_i(s)$ and $\chi \in \amax_{i \in N} c_i(s)$, respectively. Note that, in general, there may exist multiple min- and max-cost agents in any given Nash equilibrium, but in the following, we assume that $\mu$ and $\chi$ refer to the same min- and max-cost agents throughout the argument.
For an agent $i$ and a joint strategy $s$, let $T_s(i)$ denote a \emph{shortest-path tree} rooted at $i$ that is built from a breadth-first traversal of $G_s$ beginning at node/agent $i$.\footnote{Arguments using such trees are commonly used in the analysis of the Price of Anarchy for network formation games; see, for example, Albers \etal \cite{Albers2014}.} Edges of $G_s$ that appear in $T_s(i)$ are referred to as \emph{tree edges}, and those that do not are \emph{non-tree edges}. By construction, for any agent $j$ that appears in layer $k \geq 0$ of $T_s(i)$, it follows that $\ell_s(i,j) = k$. Also, for any agent $j$ appearing in layer $k$ and every non-tree edge $\{j, j'\}$, it follows (from the construction of $T_s(i)$) that agent $j'$ must be in layer $k' \in \{k-1, k, k+1\}$. 

For a pair of agents $i, j$, let $T_s(j ; i)$ denote the subtree of $T_s(i)$ that is rooted at $j$ such that, without loss of generality, all agents $j'$ that are adjacent to $j$ in $G_s$ and are in layer $\ell_s(i,j) + 1$ of $T_s(i)$, are present in the first layer of the subtree $T_s(j ; i)$.
With a slight abuse of notation, we will use $T_s(j ; i)$ to refer to the set of agents in the subtree as well as the subtree itself. 
Finally, let $\bar{T}_s(j ; i)$ refer to the subtree (and its constituent agents) of $T_s(i)$ that are not a part of $T_i(j ; i)$. See Figure~\ref{fig:fab-tree} for an illustration with $i = \mu$ and $j = \chi$ and solid (dotted) lines representing tree (non-tree) edges.

\begin{lemma}
\label{lem:fab2003:tree}
Let $\mu$ and $\chi$ be min- and max-cost agents in a Nash equilibrium $s$. The inequality ratio is at most
	\begin{equation}
	\label{eq:fab2003:tree}
	\ir(s) = \frac{c_\chi(s)}{c_\mu(s)} \leq \frac{\alpha \left( 1 + | s_\chi \cap T_s({\chi;\mu}) | \right) + 
			| \bar{T}_s(\chi;\mu) | 
			- \ell_s(\chi, \mu) \cdot | T_s(\chi;\mu) | + d_\mu(s)}{d_\mu(s)},
	\end{equation}
where $d_\mu(s) = \sum_i \ell_s(\mu,i)$ is the distance-cost for agent $\mu$.
\end{lemma}

\begin{figure}[t]
	\centering
	\includegraphics{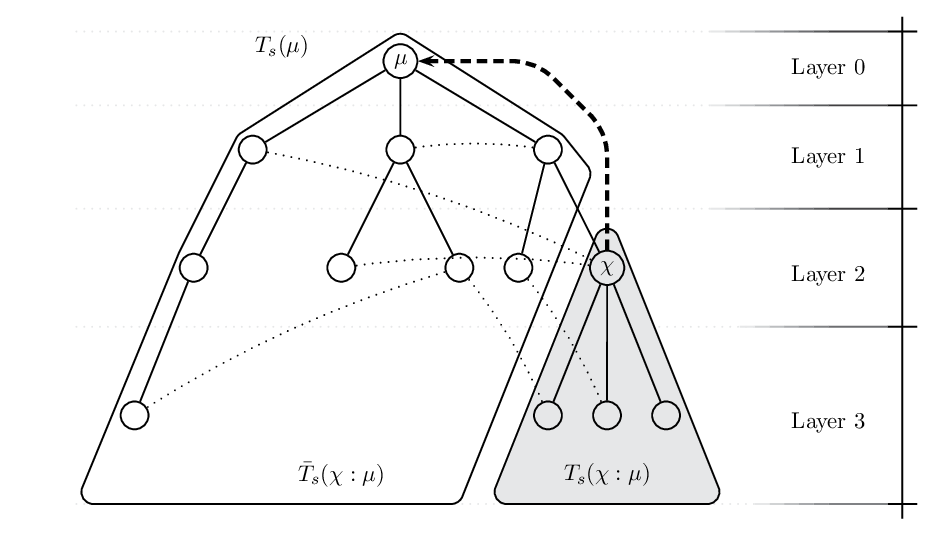}
	\caption{An induced shortest-path tree, $T_s(\mu)$, of a graph $G_s$ rooted at a min-cost agent $\mu$. Solid edges represent tree-edges, and dotted edges represent non-tree-edges. Given a max-cost agent $\chi \neq \mu$ in layer $k = \ell_s(\mu, \chi)$, the tree $T_s(\chi; \mu)$ is the subtree rooted at $\chi$ with the stipulation that every $i$ in layer $k+1$ that is adjacent to $\chi$ in $G_s$ is in $T_s(\chi; \mu)$. $\bar{T}_s(\chi;\mu)$ represents the part of $T_s(\mu)$ made up of agents/nodes that are not a part of $T_s(\chi;\mu)$.}
	\label{fig:fab-tree}
\end{figure}

Intuitively, the bound in Equation~\ref{eq:fab2003:tree} follows from deriving upper- and lower-bounds on the costs of the max- and min-cost agents, respectively. The upper bound on the cost of the max-cost agent, $\chi$, follows from considering a strategic deviation whereby $\chi$ builds a direct link to a min-cost agent $\mu$ and then drops all of its purchased links except those to agents $j \in s_\chi$ that are further from $\mu$ than $\chi$ is. If the original strategy is a Nash equilibrium, then we know that $\chi$'s cost must be no higher than its cost given this new strategy.\footnote{Our derivation of this upper bound on $\chi$'s cost extends a similar argument used by Albers \etal \cite{Albers2014}. In particular, compare our Equation~\eqref{eq:fab2003:tree:1} below with Equation~(2) in \cite{Albers2014}.} The cost lower bound for a min-cost agent, $\mu$, follows from simply assuming that $\mu$ did not purchase any edges (\ie $c_\mu(s) \geq d_\mu(s)$).

\begin{proof} 
Let $\chi$ and $\mu$ be a max- and min-cost agent, respectively, in a Nash equilibrium strategy $s$, and let $T_s(\mu)$, $T_s(\chi;\mu)$, and $\bar{T}_s(\chi;\mu)$ be as defined above. Consider the strategy $s_{\chi}'$ for agent $\chi$ that is obtained from $s_\chi$ by dropping all of $\chi$'s edges to agents outside of $T_s(\chi;\mu)$ and buying an edge to $\mu$ (see Figure~\ref{fig:fab-tree}). Agent $\chi$'s cost under this new strategy $s' = (s_{\chi}', s_{-\chi})$ is
	\begin{equation}
	\label{eq:fab2003:tree:1}
	c_{\chi}(s') \leq \underbrace{\alpha \left( 1 + | s_\chi \cap T_s(\chi;\mu) | \right)}_{b_\chi(s')}
				+ \underbrace{| \bar{T}_s(\chi;\mu) | 
					- \ell_s(\mu, \chi) \cdot | T_s(\chi;\mu) | + d_\mu(s)}_{\text{upper-bound on } d_\chi(s')}.
	\end{equation}
Because $s$ is a Nash equilibrium, we know that $c_\chi(s) \leq c_\chi(s')$, and the lemma follows from \eqref{eq:fab2003:tree:1} and the fact that $c_\mu(s) \geq d_\mu(s)$.

Agent $\chi$'s usage cost, $b_\chi(s')$, simply reflects the construction of $s_\chi'$ -- building an edge to $\mu$ and dropping all edges to nodes in $s_\chi$ that are outside of $T_s(\chi; \mu)$. The remainder of the proof is dedicated to establishing the upper bound on $d_\chi(s')$. For a set $X \subset N$, let $d_i^X(s) = \sum_{j \in X} \ell_s(i, j)$ denote agent $i$'s usage cost to agents in $X$ given the joint strategy $s$. We will bound $d_\chi(s')$ in terms of $d_\mu(s) = d_\mu^{T_s(\chi; \mu)}(s) + d_\mu^{\bar{T}_s(\chi; \mu)}(s)$. Notice that 
	\begin{eqnarray}
	d_\mu^{T_s(\chi;\mu)}(s) 
	&=& \sum_{i \in T_s(\chi;\mu)} \left( \ell_s(\mu, \chi) + \ell_s(\chi, i) \right) \nonumber \\
	&=& \ell_s(\mu, \chi) \cdot |T_s(\chi;\mu)| + \sum_{i \in T_s(\chi;\mu)} \ell_s(\chi, i) \nonumber \\
	&=& \ell_s(\mu, \chi) \cdot |T_s(\chi;\mu)| + d_\chi^{T_s(\chi;\mu)}(s) \nonumber
	\end{eqnarray}
since $\chi$ is on a shortest-path between $\mu$ and every agent $i \in T_s(\chi; \mu)$. Rearranging, we get
	\begin{equation*}
	d_\chi^{T_s(\chi;\mu)}(s) = d_\mu^{T_s(\chi;\mu)}(s) - \ell_s(\mu, \chi) \cdot |T_s(\chi;\mu)|.
	\end{equation*}
In $s'$, $\chi$'s distance to agents in $T_s(\chi;\mu)$ is unchanged from $s$, so we can express $\chi$'s usage cost to agents in $T_s(\chi;\mu)$ under $s'$ in relation to $\mu$'s distance to agents in $T_s(\chi;\mu)$ under $s$ as
	\begin{equation}
	\label{eq:fab2003:tree:2}
	d_\chi^{T_s(\chi;\mu)}(s') = d_\mu^{T_s(\chi;\mu)}(s) - \ell_s(\mu, \chi) \cdot |T_s(\chi;\mu)|.
	\end{equation}
	
Next, we can derive $d_\chi^{\bar{T}_s(\chi;\mu)}(s')$ in terms of $d_\mu^{\bar{T}_s(\chi;\mu)}(s)$ in a very straight-forward way. By including a link to $\mu$ in $s_\chi'$ and dropping all links to agents outside of $T_s(\chi; \mu)$, $\chi$'s usage cost to agents in $\bar{T}_s(\chi;\mu)$ is at most $|\bar{T}_s(\chi;\mu)|$ more than $\mu$'s usage cost to agents in $\bar{T}_s(\chi;\mu)$ under $s$; \ie 
	\begin{equation}
	\label{eq:fab2003:tree:3}
	d_\chi^{\bar{T}_s(\chi;\mu)}(s') \leq |\bar{T}_s(\chi;\mu)| + d_\mu^{\bar{T}_s(\chi;\mu)}(s).
	\end{equation}
Putting \eqref{eq:fab2003:tree:2} and \eqref{eq:fab2003:tree:3} together, we get
	\begin{eqnarray*}
	d_\chi(s') &=& d_\chi^{T_s(\chi;\mu)}(s') + d_\chi^{\bar{T}_s(\chi;\mu)}(s') \\
			&\leq& d_\mu^{T_s(\chi;\mu)}(s) - \ell_s(\mu, \chi) \cdot |T_s(\chi;\mu)| 
					+ |\bar{T}_s(\chi;\mu)| + d_\mu^{\bar{T}_s(\chi;\mu)}(s) \\
			&=& d_\mu(s) - \ell_s(\mu, \chi) \cdot |T_s(\chi;\mu)| + |\bar{T}_s(\chi;\mu)|,
	\end{eqnarray*}
where we substituted $d_\mu(s) = d_\mu^{T_s(\chi;\mu)}(s) + d_\mu^{\bar{T}_s(\chi;\mu)}(s)$ in the last line. This completes the proof.
\end{proof} 

\begin{corollary} 
\label{cor:uc:star-max}
Among Nash equilibrium strategies that arise when the edge construction cost is a constant $1 \leq \alpha < \infty$, the star topology maximizes the inequality ratio.
\end{corollary}

\begin{proof} 
We will show that the inequality ratio established in Lemma~\ref{lem:fab2003:tree} is maximized in a star network topology.
Recall Equation~\eqref{eq:fab2003:tree}, bounding the inequality ratio of an arbitrary Nash equilibrium strategy $s$ for the UC game, repeated here for convenience: 
	\begin{equation} 
	\ir(s) \leq \frac{\alpha \left( 1 + | s_\chi \cap T_s({\chi;\mu}) | \right) + 
			| \bar{T}_s(\chi;\mu) |
			- \ell_s(\chi, \mu) \cdot | T_s(\chi;\mu) | 
			+ d_\mu(s)}
		{d_\mu(s)}. \tag{\ref{eq:fab2003:tree}}
	\end{equation}
Let $x = |T_s(\chi;\mu)|$ (ergo, $|\bar{T}_s(\chi;\mu)| = n - x$) and $x_0 = 1 + | s_\chi \cap T_s(\chi;\mu) | \leq x$. Substituting 
	\begin{eqnarray*}
		d_\mu(s) &=& d_\mu^{T_s(\chi;\mu)}(s) + d_\mu^{\bar{T}_s(\chi;\mu)}(s) \\
		&=& \ell_s(\chi, \mu) x + d_\chi^{T_s(\chi;\mu)}(s) + d_\mu^{\bar{T}_s(\chi;\mu)}(s),
	\end{eqnarray*}
we can rewrite \eqref{eq:fab2003:tree} as
	\begin{eqnarray}
	\ir(s) &\leq& \frac{ \alpha x_0 + (n - x) - \ell_s(\chi, \mu) x + d_\mu(s)}{d_\mu(s)} \nonumber \\
	&=& \frac{\alpha x_0 + n - x - \ell_s(\chi, \mu) x + d_\mu^{T_s(\chi;\mu)}(s) + d_\mu^{\bar{T}_s(\chi;\mu)}(s)}{d_\mu^{T_s(\chi;\mu)}(s) + d_\mu^{\bar{T}_s(\chi;\mu)}(s)} \nonumber \\
	&=& \frac{\alpha x_0 + n - x - \ell_s(\chi, \mu) x + \ell_s(\chi, \mu) x + d_\chi^{T_s(\chi;\mu)}(s) + d_\mu^{\bar{T}_s(\chi;\mu)}(s)}{\ell_s(\chi, \mu) x + d_\chi^{T_s(\chi;\mu)}(s) + d_\mu^{\bar{T}_s(\chi;\mu)}(s)} \nonumber \\
	\label{eq:uc:star-max:2}
	&=& \frac{\alpha x_0 - n - x + d_\chi^{T_s(\chi;\mu)}(s) + d_\mu^{\bar{T}_s(\chi;\mu)}(s)}{\ell_s(\chi, \mu) x + d_\chi^{T_s(\chi;\mu)}(s) + d_\mu^{\bar{T}_s(\chi;\mu)}(s)}.
	\end{eqnarray}
Notice that in Equation~\eqref{eq:uc:star-max:2}, the distance between $\mu$ and $\chi$ in $G_s$ only appears in the denominator. Therefore, toward our aim of maximizing \eqref{eq:uc:star-max:2}, we can infer that $\ell_s(\chi, \mu) = 1$. Hence, $\mu$ and $\chi$ are necessarily adjacent in a Nash equilibrium $s$ that maximizes the inequality ratio. With $\mu$ and $\chi$ adjacent, we can substitute
	\begin{equation*}
	d_\mu^{\bar{T}_s(\chi;\mu)}(s) = d_\chi^{\bar{T}_s(\chi;\mu)}(s) - | \bar{T}_s(\chi;\mu) | 
	= d_\chi^{\bar{T}_s(\chi;\mu)}(s) - n + x
	\end{equation*}
and
	\begin{equation*}
	d_\chi^{T_s(\chi;\mu)}(s) = d_\mu^{T_s(\chi;\mu)}(s) - | T_s(\chi;\mu) |
	= d_\mu^{T_s(\chi;\mu)}(s) - x,
	\end{equation*}
giving us
	\begin{equation}
	\label{eq:uc:star-max:3}
	\ir(s) \leq \frac{\alpha x_0 + d_\chi(s) - 2x - n}{d_\mu(s) + 2x - n}.
	\end{equation}	
	
This leaves us with two approaches to identify topologies that maximize the inequality ratio: we can maximize $x_0$ (\ie by setting $x_0 = n - 1$), or we can minimize $x$ (\ie set $x = 1$). Both approaches imply star network topologies (the center- and peripheral-sponsored stars, respectively), completing the proof.
\end{proof} 

The next lemma establishes asymptotically tight upper bound on the inequality ratio for star network topologies when the edge cost $1 \leq \alpha < \infty$.

\begin{lemma}
\label{lem:fab2003:star_inequality}
As the number of agents grows toward infinity, the maximal inequality in a star topology for the UC network formation game with a constant edge cost $1 \leq \alpha < \infty$ is $\max \{2, (1 + \alpha) / 2\}$.
\end{lemma}

The proof of Lemma~\ref{lem:fab2003:star_inequality} is straight-forward, so we provide only a sketch here. The full proof can be found in Appendix~\ref{app:proofs}.

\begin{proof}[Proof Sketch]
We consider two star topologies: the center-sponsored star strategy and the peripheral-sponsored star strategy. It is easy to see that the inequality in these two star strategies dominate that of all other star strategies (see the full proof in Appendix~\ref{app:proofs} for details).

In the center-sponsored star strategy, $s^{c}$, the max-cost agent $\chi$ is in the center and the min-cost agent $\mu$ is in the periphery. Therefore, the inequality ratio is
	\begin{equation}
	\label{eq:fab2003:star_inequality:s1}
	\ir(s^{c}) = \frac{c_\chi(s^{c})}{c_\mu(s^{c})}
		= \frac{(\alpha + 1) (n - 1)}{2n - 3}. \nonumber
	\end{equation}
In the peripheral-sponsored star strategy, $s^{p}$, the min-cost agent $\mu$ is in the center and the max-cost agent $\chi$ is in the periphery. Hence, the inequality ratio is
	\begin{equation}
	\label{eq:fab2003:star_inequality:s2}
	\ir(s^{p}) = \frac{c_\chi(s^{p})}{c_\mu(s^{p})}
		= \frac{\alpha + 2n - 3}{n - 1}. \nonumber
	\end{equation}
As $n \rightarrow \infty$, $\ir(s^{c})$ approaches $(1 + \alpha) / 2$ and $\ir(s^{p})$ approaches $2$. Therefore, the inequality ratio for a star network is $\max \{ 2, (1 + \alpha) / 2\}$ as $n$ grows to infinity.
\end{proof} 

Finally, we are ready to prove the second case of Theorem~\ref{thm:uc:ub}, which will follow directly from Lemmas~\ref{lem:fab2003:tree} and \ref{lem:fab2003:star_inequality} and Corollary~\ref{cor:uc:star-max}.

\begin{lemma} 
\label{lem:uc:ub:4}
When $1 \leq \alpha < \infty$ is a constant (independent of $n$), the NIR for the UC game is at most $\max \{2, (1 + \alpha) / 2\}$ in the limit as $n \rightarrow \infty$.
\end{lemma}

\begin{proof}
From Corollary~\ref{cor:uc:star-max} we know that, among Nash equilibrium strategies, the star maximizes the inequality ratio; and Lemma~\ref{lem:fab2003:star_inequality} provides the desired IR upper bound for stars.
\end{proof}

\subsection{The Undirected Bounded Budget Connections Game}
\label{sec:upper_bounds:ubbc}
For the UBBC game, the NIR is at most two (Theorem~\ref{thm:ubbc:ub}). This bound holds for uniform instances of the game, in which every agent has the same edge budget, as well as general cases in which agents can have heterogenous budgets. This upper bound is tight, demonstrating that inequality at equilibrium in the UBBC game is independent of \emph{ex ante} inequalities in edge endowments.

\begin{theorem}
\label{thm:ubbc:ub}
The NIR for the UBBC game is at most $2$.
\end{theorem}
The proof of Theorem~\ref{thm:ubbc:ub} (see below) proceeds as follows: we argue that in any strategy profile where the agents with the maximum and minimum costs are adjacent, the inequality ratio must be strictly less than two. If, on the other hand, the min-cost and max-cost agents are \emph{not} adjacent in a Nash equilibrium, it must be the case that neither of them stand to reduce their individual costs by switching to a strategy that includes a link with the other agent, since doing so would mean that the strategy is not a Nash equilibrium in the first place. This implies that either:
	\begin{enumerate}
	\item \label{thm:ubbc:ub:outline1} The inequality ratio is already less than  $2$; or
	\item \label{thm:ubbc:ub:outline2} A maximum cost agent switching to a strategy that includes a link to a minimum cost agent would lead them to a \emph{higher} cost (\ie by disconnecting the network). 
	\end{enumerate}
However, by Lemma~\ref{lem:ubbc:cut}, case \ref{thm:ubbc:ub:outline2} cannot be a Nash equilibrium, so the NIR can be at most $2$.

\begin{lemma}
\label{lem:ubbc:cut}
Suppose $s = (s_1, \dots, s_n)$ is a UBBC strategy in which there exist (distinct) agents $x,y,z \in N$ with $x \in s_z$, $y \notin s_z$, and $z \notin s_y$ such that:
	\begin{enumerate}
	\item Every $x \rightarrow y$ path in $G_s$ contains agent $z$ as an intermediate node, and
	\item If $z$ were to swap out $x$ for a connection with agent $y$, then the network will become disconnected (\ie the edge $\{z,x\}$ is a bridge).
	\end{enumerate}
Then $s$ cannot be a Nash equilibrium.
\end{lemma}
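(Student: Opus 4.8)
The plan is to assume, toward a contradiction, that $s$ is a Nash equilibrium meeting the hypotheses, and then to exhibit some agent who can strictly lower its own cost by a unilateral deviation. By condition~2 the edge $\{z,x\}$ is a bridge, so deleting it splits $G_s$ into $X \ni x$ and its complement $Z$; condition~1 is then equivalent to $y \in Z$ (and of course $z \in Z$). Everything rests on the \emph{funnel identity}: any shortest path between $a \in X$ and $b \in Z$ must cross the bridge, so $\ell_{G_s}(a,b) = \ell_{G_s[X]}(a,x) + 1 + \ell_{G_s[Z]}(z,b)$, while distances internal to $X$ (resp.\ $Z$) are realized within $G_s[X]$ (resp.\ $G_s[Z]$); in particular $\ell_{G_s}(x,y) = 1 + \ell_{G_s[Z]}(z,y) \geq 2$, since $\{x,y\}$ is not the bridge.

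First I would dispatch the ``cheap'' configurations, where the profitable deviation costs nothing in connectivity. If $x$ itself builds the bridge ($z \in s_x$), then $x$ reallocates that redundant link to $y$: the edge $\{z,x\}$ survives because $z$ still builds it, nothing else changes, and $x$'s distance to $y$ drops from $\geq 2$ to $1$. If instead $x$ has an unused budget slot, $x$ simply adds $\{x,y\}$, again a strict improvement with no offsetting loss. These two cases also settle the pendant case $|X| = 1$ (there $s_x \subseteq \{z\}$). Hence from here on $z$ is the unique builder of $\{z,x\}$, we have $|X| \geq 2$, and $x$ spends its entire budget on links inside $X$.

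Next I would extract structure from the fact that none of $z$'s deviations can pay off. Re-aiming the bridge from $x$ to any other vertex $a \in X$ leaves $G_s$ connected, and by the funnel identity changes $c_z$ by $\sum_{v \in X}\bigl(\ell_{G_s[X]}(a,v) - \ell_{G_s[X]}(x,v)\bigr)$; non-negativity for every $a$ says precisely that $x$ minimizes the within-$X$ sum of distances, i.e.\ $x$ is a $1$-median of $G_s[X]$. Likewise, if $z$ held a spare slot it could add a second link into $X$, which the funnel identity shows is strictly profitable once $|X| \geq 2$; so $z$ too is at full budget. A one-line funnel computation moreover records $c_z(s) - c_x(s) = |X| - |Z|$ and shows that $x$ is the cheapest agent of $X$. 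The remaining task is to build a strictly profitable deviation for a well-chosen agent on the ``smaller'' side of the cut: the natural candidates are $x$ redirecting one of its internal links to $y$, and $y$ (or another $Z$-agent that owns a redundant link, or that owns the edge $\{y,z\}$ when $y$ and $z$ are adjacent) redirecting a link to $x$. One then verifies, using the funnel identity and the $1$-median property, that the distance \emph{saved} across the bridge --- at least one unit for \emph{each} vertex of the opposite component, since all $X$--$Z$ traffic is funneled through the single bridge --- strictly exceeds the distance lost inside the deviator's own component.

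I expect the crux, and the main obstacle, to be exactly the subcase in which the would-be deviator is at full budget and all of its links are load-bearing, so that redirecting one of them genuinely stretches some internal shortest paths. Two levers seem to close this gap. The first is that the cross-bridge gain is summed over the \emph{entire} opposite component while the internal loss stays confined to the deviator's own component; comparing $|X|$ with $|Z|$, and using the $1$-median property to cap how badly an internal re-aim can hurt, should tip the balance on one side of the cut. The second handles the case where $X$ is the offending side: if $x$ owns a link that is itself a bridge of $G_s[X]$, then that link together with the vertex $y$ furnishes the same configuration one level deeper inside $X$, so the statement follows by induction on $|X|$ (base case $|X|=1$ being the pendant case above); and if every link $x$ owns lies on a cycle of $G_s[X]$, re-aiming such a (non-bridge) link keeps $G_s[X]$ connected, and the size-comparison estimate applies. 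A final, routine point throughout is the ownership bookkeeping --- in each branch one must identify which endpoint actually controls the redirected edge, so that the deviation is feasible and the edge truly vanishes (or truly persists) as claimed.
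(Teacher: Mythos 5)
Your skeleton matches the paper's: delete the bridge $\{z,x\}$, compare the sizes of $X$ and its complement, and look for an agent who can redirect a redundant link across the cut. The funnel identity, the $1$-median observation for $x$, and the induction on $|X|$ are reasonable additions. But the proposal is not yet a proof, and you say so yourself: the decisive step --- that in the full-budget, all-links-load-bearing case the cross-bridge saving strictly exceeds the internal stretching --- is exactly what you label the ``crux,'' and the two ``levers'' you offer are said to ``seem to'' close it rather than shown to close it. In particular, the claim that the $1$-median property ``caps how badly an internal re-aim can hurt'' is unsubstantiated: redirecting a non-bridge link inside $X$ can increase the deviator's within-$X$ distances by an amount on the order of $|X|$ (think of $G_s[X]$ close to a long cycle), and you never carry out the comparison against the cross-bridge gain of roughly the size of the opposite component. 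Since the entire lemma reduces to exhibiting one strictly profitable deviation, leaving that inequality unverified is a genuine gap, not a routine detail.

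For comparison, the paper does not restrict the deviator to $x$, $y$, or $z$. It uses the hypothesis that every agent owns at least one edge to argue that the relevant component contains a cycle, picks an agent on that cycle, and has it redirect a cycle edge (so connectivity is automatic) into the opposite component; the saving is credited as at least one unit per vertex of the larger side of the cut, i.e.\ at least $n/2$. Your case split on whether $x$'s links are bridges of $G_s[X]$ is essentially reconstructing this ``some agent owns a redundant link'' fact, but by fixing the deviator in advance you force yourself into the induction and into the unresolved size comparison. If you free the choice of deviator as the paper does, the argument collapses to two symmetric cases, and the one inequality you are missing becomes the single thing left to check.
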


The proof of Lemma~\ref{lem:ubbc:cut} employs the main result of \cite{Mihalak2012}, regarding the structure of networks in asymmetric swap-equilibrium. The asymmetric swap-equilibrium holds for a strategy $s$ when, for every agent $i \in N$ and every deviation $s_i'$ that differs from $s_i$ with the addition, removal, or swap of a single edge, we have $c_i(s_i, s_{-i}) \leq c_i(s_i', s_{-i})$. The asymmetric swap-equilibrium is a weaker equilibrium concept than the Nash equilibrium for the UBBC, so their result (restated in Theorem~\ref{thm:Mihalak2012} below) carries over to Nash equilibrium for the UBBC. Recall that a graph is $k$-edge-connected if the removal of any $k-1$ edges does not disconnect it. A \emph{$k$-edge-connected component} of a graph $G$ is a maximal subgraph $G' \subset G$ that is $k$-edge-connected.

\begin{theorem}[\cite{Mihalak2012}]
\label{thm:Mihalak2012}
Every network in an asymmetric swap-equilibrium has at most one 2-edge-connected component.
\end{theorem}

Put another way, Theorem~\ref{thm:Mihalak2012} states that a Nash equilibrium strategy $s$ will not induce a graph $G_s$ in which the removal of a single edge will split the graph into two components that are both $2$-edge-connected; \ie the two components cannot both contain cycles.

\begin{figure}[t]
	\centering
	\includegraphics{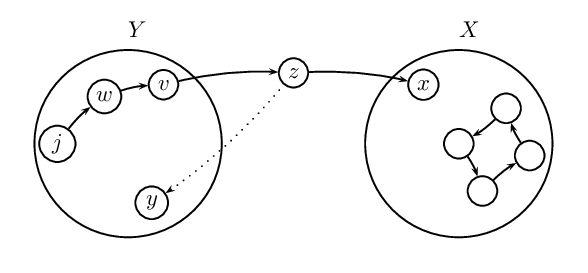}
	\caption{A schematic of the conditions expressed in Lemma~\ref{lem:ubbc:cut}.}
	\label{fig:ubbc:cutlemma}
\end{figure}

\begin{proof}[Proof of Lemma~\ref{lem:ubbc:cut}]
Let $s$ be a strategy with agents $x,y,z \in N$ that satisfy the given conditions and assume (toward a contradiction) that $s$ is a Nash equilibrium. Denote by $X$ the component of the network containing $x$ that would result from the removal of $x$ from agent $z$'s strategy, and let $Y = N \setminus X \cup \{z\}$ be the set of the remaining agents (not in $X$) also excluding agent $z$ (see Figure~\ref{fig:ubbc:cutlemma}). Recall that every agent $i$ contributes at least one edge to $G_s$ (since $k_i \geq 1$) so the component $X$ must contain a cycle. Hence, by Theorem~\ref{thm:Mihalak2012}, we know that the subgraph induced by agents in $Y$ must not contain a cycle.

Let $G_Y$ denote the subgraph induced by the agents in $Y$. If $\exists j \in Y$ such that $k_j > 1$ then either $|s_j| < k_j$ and $j$ is not playing a best response (since they could build an edge to an non-neighboring agent and decrease their cost) or $G_Y$ will contain a cycle; both scenarios contradicting our assumption that $s$ is a Nash equilibrium. Therefore, assume that the subgraph $G_Y$ is a tree, and that $k_j = 1$ for all $j \in Y$. Clearly, for every path $p = (y_1, y_2, \dots, y_m = z)$ that begins at an agent $y_1 \in Y$ and ends at $z$, we have $y_{i+1} \in s_{y_i}$ (to be otherwise would imply that $k_j > 1$ for some $j \in Y$).

Let $j \in Y$ be a leaf in $G_Y$ that is furthest from $z$. Let $w \in s_j$ be $j$'s parent, and let $v \in s_w$ be $j$'s grandparent. If $w = z$ then we are done because this contradicts our assumption that $y$ and $z$ are not adjacent in $G_s$. So assume that $w \neq z$. Let $D_w$ be $w$'s descendants in $G_Y$ (\ie the set of agents for whom $w$ lies on their unique path to $z$), and let $C_w \subseteq D_w$ be $w$'s children (\ie the set of agents in $D_w$ that are adjacent to $w$). Consider a deviation by $j$ to the strategy $s_j' = \{v\}$ that replaces $j$'s link to their parent with a link to their grandparent. This deviation will increase $j$'s distance to $w$ and all nodes in $C_w \setminus \{j\}$ by one and decrease $j$'s distance to every node in $N \setminus C_w \setminus \{j\}$ by one. If $|C_w| < |N \setminus C_w| - 1$ then agent $j$ will benefit from switching to strategy $s_j'$, contradicting our assumption that $s$ is a Nash equilibrium. 

Suppose that $|C_w| \geq |N \setminus C_w| - 1$, which implies that $|C_w| \geq (n-1)/2$. Let $u \in s_v$ be $w$'s grandparent in $G_Y$. If there is another agent $w'$ with $v \in s_{w'}$ then this agent can swap their link to $v$ with a link to $w$ and receive a strictly lower cost, which implies that $s$ was not a Nash equilibrium. If there is no such $w'$, then $w$ can swap their link to $v$ with a link to their grandparent $u$ and receive a strictly lower cost. This again implies that $s$ was not a Nash equilibrium. This completes the proof.
\end{proof} 

We are now ready to attend to the proof of Theorem~\ref{thm:ubbc:ub}. We will use the shorthand $G - \{i,j\}$ to denote the graph $G$ with the edge $\{i,j\}$ removed.

\begin{proof}[Proof of Theorem~\ref{thm:ubbc:ub}]
Consider (toward a contradiction) a strategy profile $s$ that is a Nash equilibrium in which $c_{max}(s) \geq 2 \cdot c_{min}(s)$; \ie a strategy $s$ such that the inequality ratio is at least $2$. Notice that in this strategy an agent $\chi$ with $c_\chi(s) = c_{max}(s)$ cannot be directly connected to an agent $\mu$ with $c_\mu(s) = c_{min}(s)$, because if it were then $\chi$ would be connected to the $n-2$ other agents via $\mu$'s shortest paths for a cost that is at most $n-2$ more than the cost $\mu$ is subjected to. That is, if $\chi \in s_\mu$ or $\mu \in s_\chi$ then 
	\begin{equation}
	\label{eq:ubbc:ub:1}
	c_\chi(s) \leq (n - 2) + c_\mu(s) < 2 \cdot c_\mu(s).
	\end{equation}
The strict inequality in \eqref{eq:ubbc:ub:1} comes from the fact that $c_\mu(s) > n-2$ since it is impossible for $\mu$ to connect to $n-1$ other agents for a cost any less than $n-1$. Therefore, in order for $c_\chi(s) \geq 2 \cdot c_\mu(s)$, it must be the case that $\mu \notin s_\chi$ and $\chi \notin s_\mu$. This implies that $c_\chi(s)$ must be no greater than $(n-2) + c_{\mu}(s)$, since switching to a strategy $s_{\chi}'$ that includes a link to agent $\mu$ would ensure as much, provided that switching to $s_{\chi}'$ does not disconnect the network.

Suppose that $c_\chi(s) > (n-2) + c_\mu(s)$ and if $\chi$ were to switch to a strategy $s_{\chi}'$ that is obtained by swapping out some $x \in s_\chi$ for a link to $\mu$ results in the network $G_{(s_{\chi}', s_{-\chi})}$ becoming disconnected. Then it must be that every $x \rightarrow \mu$ path in $G_s$ includes agent $\chi$, in which case it follows that the component of $G_{s} - \{\chi, x\}$ including $x$ contains a cycle since every agent's budget affords them at least one edge. However, by Lemma~\ref{lem:ubbc:cut}, such a strategy $s$ cannot be a Nash equilibrium, contradicting our assumption that $s$ is a Nash equilibrium. This completes the proof.
\end{proof} 

The upper bound established by Theorem~\ref{thm:ubbc:ub} is asymptotically tight. This can be observed in a ``star-like'' network in which every agent has a link to the min-cost agent $\mu$, and the degree of the max-cost agent $\chi$ is bounded by a constant independent of the number of agents, $n$.\footnote{For a concrete example, consider a uniform instance of the UBBC game in which $k_i = 1$ for all $i \in N$, and the Nash equilibrium strategy that forms a star network with one ``extra'' edge linking a pair of peripheral agents.} In the asymptotic limit, as the number of agents tends toward infinity, the inequality ratio between $\chi$ and $\mu$ equals $2$.

\section{Equality, Equilibrium, and Efficiency}
\label{sec:ee}
With the upper bounds on inequality in Nash equilibrium strategies established, we are now in a position to address the relationship between (in)equality, equilibrium, and efficiency. Although our focus here is on a specific metric of inequality (the inequality ratio) and how it relates to efficiency in a particular setting (the network formation games), the connection between inequality and efficiency more generally is a matter of considerable interest among economic analysts, researchers, and policy makers (\cf \cite{Baland1997,Baland1998,Duclos2006,Bourguignon2007,Bowles2012}), as well as the general public \cite{Stiglitz2012,Piketty2014}. As in Section~\ref{sec:upper_bounds}, the presentation of this section is divided into two parts, with the UC game analyzed in \ref{sec:ee:uc} (see Theorem~\ref{thm:uc:ee}) followed by the UBBC in Section~\ref{sec:ee:ubbc} (see Theorem~\ref{thm:ubbc:ee}).

\subsection{The Undirected Connections Game}
\label{sec:ee:uc}
Theorem~\ref{thm:uc:ee} summarizes our results on the relationship between equality, efficiency, and Nash equilibrium for the UC game for the three regimes of $\alpha$ identified in Proposition~\ref{prop:fab2003:range}. 

\begin{theorem}
\label{thm:uc:ee}
The relationship between inequality, Nash equilibrium, and efficiency in the UC game:
	\begin{enumerate}
	\item \label{thm:uc:ee:1}
	When $\alpha < 1$, there exist efficient Nash equilibrium strategies that maximize the inequality ratio, and other efficient Nash equilibrium strategies that achieve cost equality in the limit as $n \rightarrow \infty$.
	
	\item \label{thm:uc:ee:2}
	When $1 \leq \alpha < 2$, no Nash equilibrium strategy is also efficient. However, among Nash equilibrium strategies, there exist some that maximize the inequality ratio, while others are egalitarian in the limit as $n \rightarrow \infty$.
	
	\item \label{thm:uc:ee:3}
	When $2 \leq \alpha < \infty$, there exist Nash equilibrium strategies that are 
		\begin{itemize}
		\item both efficient and egalitarian in the limit as $n \rightarrow \infty$; and
		\item efficient with maximal inequality in the limit as $n \rightarrow \infty$.
		\end{itemize}
	\end{enumerate}
\end{theorem}

We will prove the three parts of Theorem~\ref{thm:uc:ee} individually in the next three lemmas. 

\begin{lemma}
\label{lem:uc:ee:1}
When $\alpha < 1$, there exist efficient Nash equilibrium strategies for the UC game that maximize the inequality ratio, and other efficient Nash equilibrium strategies that achieve cost equality in the limit as $n \rightarrow \infty$.
\end{lemma}

\begin{proof}
By Proposition~\ref{prop:fab2003:range}, when $\alpha < 1$, both Nash equilibrium strategies and socially efficient strategies result in the formation of the complete network. From this fact and Theorem~\ref{thm:uc:ub}, it follows that there exist socially efficient strategies which maximize the inequality ratio. Therefore, we only need to show that there exist equilibrium strategies that achieve cost equality.

Equality among all agents would follow if every agent builds exactly $(n-1) / 2$ edges, which can only occur when $n$ is odd. When, on the other hand, $n$ is even, then $(n-1)/2$ is non-integral; so the closest we can get to equality calls for a strategy $s$ in which the max-cost agents each build $\lceil (n-1)/2 \rceil$ edges while the min-cost agents get away with building one fewer edge each. In this case, a min-cost agent's cost is
	$c_{min}(s) = c_{max}(s) - \alpha,$
and the inequality ratio is
	\begin{eqnarray}
	\ir(s) &=& \frac{c_{max}(s)}{c_{max}(s) - \alpha} \nonumber \\
		&=& \frac{ (n-1) - \lceil (n-1)/2 \rceil \alpha }{ (n-1) - (\lceil (n-1)/2 \rceil - 1) \alpha } \nonumber \\
		&=& 1 - \frac{ \alpha }{ \alpha + \alpha \lfloor (1-n)/2 \rfloor + n - 1}. \label{eq:uc:ee:1:1}
	\end{eqnarray}
As $n \rightarrow \infty$, Equation~\eqref{eq:uc:ee:1:1} approaches $1$. Hence, inequality vanishes as $n$ increases, concluding the proof.
\end{proof} 

Turning to the second item of Theorem~\ref{thm:uc:ee}, which addresses the case that $\alpha \in [1,2)$, we first note that Proposition~\ref{prop:fab2003:range} already implies that no Nash equilibrium strategy is also an efficient strategy. This is because the Nash equilibrium strategies for this regime of $\alpha$ correspond to networks with star topologies while efficient strategies correspond to complete networks in which every possible edge is present. Furthermore, by Corollary~\ref{cor:uc:star-max} we know that the upper bound on the inequality ratio for this regime of $\alpha$ is achieved in a star topology. Hence, to prove the second item of Theorem~\ref{thm:uc:ee} we need only address the existence of equilibrium strategies for which there is cost equality among the agents.

\begin{lemma}
\label{lem:uc:ee:2}
	When $1 \leq \alpha < 2$, the UC game admits a Nash equilibrium strategy that is egalitarian in the limit as $n \rightarrow \infty$.
\end{lemma}

\begin{proof}
Consider a joint strategy $s$ inducing a star network topology among the $|N| = n$ agents in which the central agent $c \in N$ buys $k = 1 + \lfloor \frac{n-2}{\alpha} \rfloor$ edges $\{c,i\}$ to agents $N_{\lnot b} \subset N$; all the remaining edges $\{j,c\}$ are paid for by their respective peripheral agent $j \in N_b = N \setminus N_{\lnot b} \cup \{c\}$. (The sets $N_b$ and $N_{\lnot b}$ partition $N$ so that all agents that buy edges appear in $N_b$ and all those who do not buy any edges appear in $N_{\lnot b}$.) In this strategy, the agents $i \in N_{\lnot b}$ all incur a cost $c_i(s) = 2n - 3$, all agents $j \in N_b$ incur a cost of $c_j(s) = 2n - 3 + \alpha$, and $c$ incurs a cost $c_c(s) = n - 1 + \alpha k \leq 2n - 3 + \alpha$.
Since $c_i(s) < c_c(s) \leq c_j(s)$, the inequality ratio is:
	\begin{equation}
	\label{eq:uc:ee:2:1}
	\ir(s) = \frac{c_j(s)}{c_i(s)} = \frac{2n - 3 + \alpha}{2n - 3} = 1 + \frac{\alpha}{2n - 3},
	\end{equation}
which approaches $1$ in the limit as $n \rightarrow \infty$. It is easy to see that $s$ is in fact a Nash equilibrium since no agent stands to reduce their cost by deleting an edge (since doing so would disconnect the network, prompting an infinite cost) and, at the cost of $\alpha \geq 1$ per edge, the addition of any edges will decrease the agent's usage cost only by $1$ (per edge), yielding no net decrease in the agent's total cost.
\end{proof}

Finally, the third item of Theorem~\ref{thm:uc:ee} addresses the regime when $\alpha$ is a constant greater than or equal to $2$. In this regime, we show that there exist some efficient Nash equilibrium strategies that are egalitarian, and other efficient Nash equilibrium strategies that achieve maximal inequality.

\begin{lemma}
\label{lem:uc:ee:3}
	When $2 \leq \alpha < \infty$, there exist Nash equilibrium strategies in the UC game which are both efficient and egalitarian in the limit as $n \rightarrow \infty$; and there are Nash equilibrium strategies that are both efficient and achieve maximal inequality.
\end{lemma}

\begin{proof}
Recall from Proposition~\ref{prop:fab2003:range} that, when $\alpha \geq 2$ the star is an efficient Nash equilibrium strategy. Thus, the lemma will follow if we can show that some star strategies can support maximal inequality while others can support equality.

To show the first part of the lemma (the existence of an egalitarian strategy that yields to a star network topology), we can invoke the strategy constructed in the proof of Lemma~\ref{lem:uc:ee:2}. Recall that this strategy calls for the central agent $c \in N$ to buy $1 + \lfloor \frac{n-2}{\alpha} \rfloor$ edges to peripheral agents $N_{\lnot b} \subset N$, and all remaining agents $j \in N \setminus N_{\lnot b} \cup \{c\}$ buy a single edge to $c$. The inequality ratio for this strategy is expressed by Equation~\ref{eq:uc:ee:2:1}, which approaches $1$ in the limit as $n \rightarrow \infty$, and is hence egalitarian.

The second part of the lemma (the existence of a star-topology-yielding strategy that achieves the inequality upper bound established in Theorem~\ref{thm:uc:ub}) follows immediately from Corollary~\ref{cor:uc:star-max}, which states that inequality in the UC game is maximized in networks with a star topology.
\end{proof}

\subsection{The Undirected Bounded Budget Connections Game}
\label{sec:ee:ubbc}

\begin{figure}[t]
	\centering
	\includegraphics{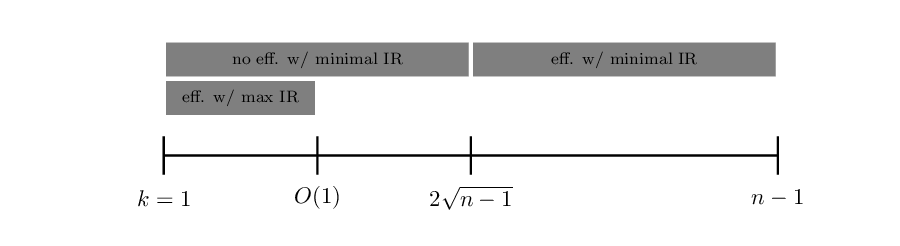}
	\caption{A summary of Theorem~\ref{thm:ubbc:ee}, characterizing the existence of Nash equilibrium strategies in the UBBC game with respect to efficiency and the inequality ratio for different uniform budgets, $k$.}
	\label{fig:ubbc:effeq}
\end{figure}

This section establishes the relationship between efficient Nash equilibrium strategies and (in)equality in the UBBC game with uniform budgets. We find that when the budget $k$ is sufficiently small ($k = O(1)$) we can construct an efficient Nash equilibrium strategy in which inequality is maximized; and when the edge budget is sufficiently large ($k \geq 2 \sqrt{n-1}$) then we show the existence of efficient Nash equilibrium strategies in which there is cost equality among all of the agents. Finally, we prove the non-existence of efficient Nash equilibrium strategies with egalitarian costs when $k < 2 \sqrt{n-1}$. These findings are formally presented in Theorem~\ref{thm:ubbc:ee} and summarized in Figure~\ref{fig:ubbc:effeq}.

\begin{theorem}
\label{thm:ubbc:ee}
The relationship between inequality, Nash equilibrium, and efficiency in the UBBC game with uniform budgets $k$:
	\begin{enumerate}
		\item \label{thm:ubbc:ee:1}
		When $k \geq 1$ is a constant (independent of $n$), there exists an efficient Nash equilibrium strategy that achieves the NIR upper bound of $2$ established in Theorem~\ref{thm:ubbc:ub}.
		
		\item \label{thm:ubbc:ee:2}
		When $k \geq 2 \sqrt{n-1}$, there exist efficient Nash equilibrium strategies that are egalitarian. 
		
		\item \label{thm:ubbc:ee:3}
		When $1 \leq k < 2 \sqrt{n-1}$, there does not exist an efficient Nash equilibrium strategy that is egalitarian.
	\end{enumerate}
\end{theorem}

\begin{lemma}
\label{lem:ubbc:ee:1}
In the UBBC game with constant (independent from $n$) uniform edge budgets $k \geq 1$, there exists an efficient Nash equilibrium strategy that achieves the NIR upper bound of $2$.
\end{lemma}

\begin{proof}
Fix $k$ and let $N = \{0, 1, \dots, n-1 \}$ denote the set of strategic agents. We will construct an efficient Nash equilibrium strategy $s$ as follows (see Figure~\ref{fig:ubbc:ee:1}):
	\begin{enumerate}
		\item $\forall i \in \{0, 1, \dots, k\}$, set $s_i = \{ i+1, i+2, \dots, i+k \}$.
		\item $\forall i \in \{ k+1, k+2, \dots, 2k-1 \}$, set $s_i = \{0, 1, \dots, i-k-1, i+1, i+2, \dots, 2k \}$.
		\item $\forall i \in \{ 2k, 2k+1, \dots, n-1 \}$, set $s_i = \{0, 1, \dots, k-1 \}$.
	\end{enumerate}
The strategy $s$ produces a network that has a diameter of $2$. Thus, per Lemma~\ref{lem:ubbc:diam2}, it is a Nash equilibrium.

To show that $s$ is socially efficient, we must determine its social cost. We can partition the agents $N$ into three sets that correspond to three ``tiers'' of costs: 
	\begin{itemize}
		\item Agents in $N_a = \{0, 1, \dots, k-1 \}$ are each directly connected to every other agent, so $c_i(s) = n - 1$ for all $i \in N_a$. Agents in $N_a$ incur the minimal cost among $N$.
		
		\item Agents in $N_b = \{2k+1, 2k+2, \dots, n-1 \}$ are each directly linked to every agent in $N_a$ and two-hops away from every other agent, so $c_i(s) = 2(n-1) - k$ for all $i \in N_b$. Agents in $N_b$ incur the maximal cost among $N$.
		
		\item Agents $N_c = \{k, k+1, \dots, 2k \}$ each have a direct link to every agent in $N_a$ and $N_c$ and a two-hop distance to agents in $N_b$, so $c_i(s) = 2(n - k - 1)$ for all $i \in N_c$.
	\end{itemize}
The cardinality of these partitions are $|N_a| = k$, $|N_b| = n - 2k - 1$, and $|N_c| = k+1$, and the social cost is
	\begin{eqnarray*}
	C(s) &=& \left[ k (n-1) \right] + \left[ (n - 2k - 1) (2 (n-1) - k) \right] + \left[ (k+1) (2 (n-k-1)) \right] \\
		&=& 2(n^2 - n - kn).
	\end{eqnarray*}
By Lemma~\ref{lem:ubbc:sc:eff}, this is the minimal social cost in a uniform game with size-$k$ budgets. Therefore, $s$ is socially efficient.

The inequality ratio is between a min-cost agent $i \in N_a$ and a max-cost agent $j \in N_b$;
	\begin{equation}
	\label{eq:ubbc:ee:1}
	\ir(s) = \frac{c_j(s)}{c_i(s)} = \frac{ 2(n-1) - k }{ n-1} = 2 - \frac{k}{n-1}.
	\end{equation}
Since \eqref{eq:ubbc:ee:1} approaches $2$ as $n \rightarrow \infty$, this strategy maximizes the inequality ratio established for the UBBC in Theorem~\ref{thm:ubbc:ub}. Hence, $s$ is an efficient Nash equilibrium strategy that achieves maximal inequality.
\end{proof}

\begin{figure}[t]
	\begin{minipage}[b]{.33\linewidth}
		\centering
		\includegraphics{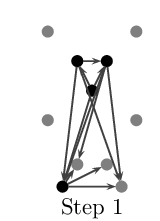}
	\end{minipage}
	\begin{minipage}[b]{.32\linewidth}
		\centering
		\includegraphics{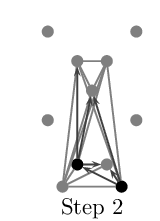}
	\end{minipage}
	\begin{minipage}[b]{.33\linewidth}
		\centering
		\includegraphics{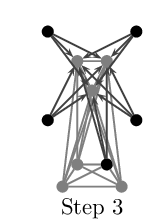}
	\end{minipage}
\caption{Example of the three parts to the strategy constructed in the proof of Lemma~\ref{lem:ubbc:ee:1}. Dark nodes and edges are those assigned in the specified step and arrows are meant to convey link ``ownership''.}
\label{fig:ubbc:ee:1}
\end{figure}

The next lemma establishes the second part of Theorem~\ref{thm:ubbc:ee}, showing that when edge budgets are sufficiently large, we can find an efficient Nash equilibrium strategy which achieves perfect equality among the agents. But first we need the following fact (see Appendix~\ref{app:proofs} for a proof):

\begin{fact}
\label{fact:diam2regular}
A $\delta$-regular graph of diameter two can have at most $n = \delta^2 + 1$ nodes.\footnote{Recall that a $\delta$-regular graph/network is a graph in which every node has $\delta$ neighbors.}
\end{fact}

\begin{lemma}
\label{lem:ubbc:ee:2}
In the UBBC game with uniform edge budgets $k \geq 2 \sqrt{n-1}$, there exists efficient Nash equilibrium strategies that are egalitarian. 
\end{lemma}

\begin{proof}
For a strategy $s$ to be efficient and egalitarian, it must be the case that $G_s$ has a diameter of at most $2$, and all agents have common one- and two-hop neighborhood sizes. When $k \geq (n-1)/2$, then this is trivially satisfied by the complete graph. Satisfying both of these properties for $k < (n-1)/2$ requires that $G_s$ be $2k$-regular with a diameter of $2$. From Fact~\ref{fact:diam2regular}, we know that such graphs are only possible if $k \geq 2 \sqrt{n-1}$.
\end{proof}

Finally, we turn to the third part of Theorem~\ref{thm:ubbc:ee}, which establishes the non-existence of efficient Nash equilibrium strategies that are egalitarian when edge budgets are below $2 \sqrt{n-1}$. 
The non-existence follows as a corollary of Fact~\ref{fact:diam2regular} and Proposition~\ref{prop:ubbc:diam2:eff} by observing that when $k < 2 \sqrt{n-1}$, there are not enough edges to create a regular, diameter-$2$ network.

\begin{corollary}
\label{cor:ubbc:ee:3}
When $1 \leq k < 2 \sqrt{n-1}$, there does not exist an efficient Nash equilibrium strategy that is egalitarian.
\end{corollary}

\section{Discussion and Conclusions}
\label{sec:con}
This paper examines inequality in two simple models of strategic network formation: the {\sc Undirected Connections} (UC) game of Fabrikant \etal \cite{Fab2003} and the {\sc Undirected Bounded Budget Connections} (UBBC) game introduced by Ehsani \etal \cite{Ehsani2011}. To this end, we introduce the Nash Inequality Ratio (NIR) as an instrument to quantify the level of inequality that can exist between a pair of agents in a Nash equilibrium. Upper bounds on the NIR are established for both games, and we show that these bounds are tight.

Our analysis of these games reveal an interesting relationship between scarcity (expressed by lower budgets in the UBBC model and high edge costs in the UC model) and inequality. In the UBBC game, we found that when edges are scarce the inequality upper-bound is attainable in Nash equilibrium; but when edges are more plentiful, the highest level of inequality is not sustainable in equilibrium. A similar correspondence is observed in the UC model, where higher edge costs can support maximal inequality in equilibrium. With respect to efficiency, we find that the two games behave differently from one another: in the UC game, scarcity can support efficient Nash equilibrium strategies with either egalitarian costs or maximal inequality; but the UBBC game only supports efficient Nash equilibrium strategies that not egalitarian when edges are scarce.

The NIR affords a wide-angle lens through which inequality can be analyzed, characterizing the \emph{extent} to which the costs incurred by a pair of agents can differ in equilibrium. It does not, however, provide a clear view on the \emph{distribution} of inequality among populations of agents, nor does it identify the inequality ratio of a ``typical'' Nash equilibrium strategy. Indeed, in our analysis of both network formation games we often relied on the inequality furnished by networks with a star topology in which there is a single, central agent that incurred the lowest (or sometimes highest) cost while the remaining $n-1$ agents all incurred the highest (or lowest) cost. In star networks, the inequality between a pair of agents selected uniformly at random is, with a high probability, nonexistent since neither agent will likely be the central agent. One interesting direction for future research that extends the analysis presented here to derive upper bounds on the expected inequality ratio between randomly selected pairs of agents. A related problem is to bound the inequality ratio of a randomly chosen Nash equilibrium strategy, or from the set of equilibrium strategies that result from a particular game dynamic like best- or approximate-best-response.

We believe that an analysis of the NIR will provide interesting insight into many other games beyond the two network formation games analyzed here. As with the Price of Anarchy, the NIR can be used to establish a bound on the ``price'' of strategic behavior by answering the question: To what extent can the speed of costs/benefits be found in equilibrium outcomes of distributed decision making by self-interested agents? It may be anarchy, but is it fair?

\subsection{Acknowledgments}
\label{sec:con:ack}
We gratefully acknowledge support from the U.S. Army Research Laboratory and the U. S. Army Research Office under Cooperative Agreement W911NF-09-2-0053 and MURI award W911NF-13-1-0340, as well as the Defense Threat Reduction Agency Basic Research Grant No. HDTRA1-10-1-0088. We would also like to thank the anonymous referees whose comments have helped improve this manuscript considerably.

\bibliographystyle{alpha} 
\bibliography{bib}

\appendix

\section{Omitted Proofs}
\label{app:proofs}

\begin{replemma}{lem:ubbc:sc:eff}
The social cost of any efficient strategy profile for a uniform UBBC instance with edge budgets $k \geq 1$ is
	\begin{equation}
	\label{eq:ubbc:sc:eff}
		2 n (n - 1) - 2 n k.
	\end{equation}
\end{replemma}

\begin{proof}
Let $C_{min}(n,k)$ be the social cost of an efficient outcome for a uniform UBBC instance with $n$ agents, each with a budget of $k$ edges. When $k = k_{max} = (n-1)/2$, the complete graph is the efficient outcome with a social cost of $C_{min}(n,k_{max}) = n(n-1)$.  Because an individual edge must lie on at least two shortest paths (\eg the edge $\{i,j\}$ is on both the $i \rightarrow j$ and $j \rightarrow i$ paths), starting from the complete network, every edge removal increases the social cost by at least $2$. Therefore, decreasing the edge budget by one reduces the social cost by at least $2n$. Hence, we can express $C_{min}(n,k)$ by
	$$C_{min}(n,k) = n(n-1) + 2n (k_{max} - k).$$
Substituting $k_{max} = (n-1)/2$ gives 
	$$C_{min}(n,k) = n(n-1) + 2n \left( \frac{n-1}{2} - k \right),$$
which can be rearranged into Equation~\eqref{eq:ubbc:sc:eff}, completing the proof.
\end{proof} 

\begin{repproposition}{prop:ubbc:diam2:eff}
Every uniform UBBC instance with $k < (n-1)/2$ has an efficient outcome with a diameter of $2$.
\end{repproposition}
Proving Propositon~\ref{prop:ubbc:diam2:eff} relies on the following lemma:

\begin{lemma}
\label{lem:ubbc:diam2:sc}
The social costs of all diameter-$2$ topologies with a fixed number of $m \leq \frac{n(n-1)}{2}$ non-overlapping edges are equivalent.
\end{lemma}

\begin{proof}
Suppose that $s$ is a joint strategy profile which induces a diameter-$2$ network $G_{s}$ with $m$ distinct (\ie non-parallel) edges. Because $G_{s}$ has a diameter of $2$, we can express the cost incurred by each agent $i \in N$ in terms of their degree and the cardinality of their one- and two-hop neighborhoods. Let $N_x(G_s, i)$ and $d(G_s,i)$ denote $i$'s $x$-hop neighborhood and degree, respectively, in $G_s$. The cost to $i$ can be expressed as
	\begin{eqnarray}
	c_i(s) 	&=& d(G_s,i) + 2 |N_2(G_s,i) \setminus N_1(G_s,i)| = d(G_si) + 2 \left(  |N_2(G_s,i)| - |N_1(G_s,i)| \right) \nonumber \\
			&=& d(G_s,i) + 2 \left( |N_2(G_s,i)| - d(G_s,i) - 1  \right) = 2 |N_2(G_s,i)| - d(G_si) - 2 \nonumber \\
			\label{eq:lem:diam2:cost}
			&=& 2n - d(G_s,i) - 2
	\end{eqnarray}
The substitution of $|N_2(i)|$ for $n$ in the last line is a requirement of our assumption that the network $G_{s}$ has a diameter of $2$. 

With the cost to an individual agent in a diameter-$2$ network established in Equation~\eqref{eq:lem:diam2:cost}, we can turn our attention to the social cost. Let $C(n,m)$ denote the social cost of a diameter-$2$ network $G$ with $n$ nodes and $m$ edges. 
	\begin{equation}
	\label{eq:lem:diam2:sc}
	C(n,m) = \sum_{i \in N} \left( 2n - d(G,i) - 2 \right) = 2n^2 - 2n - 2m.
	\end{equation}
The lemma follows from the fact that Equation~\eqref{eq:lem:diam2:sc} depends only on the diameter-$2$ assumption, the number of nodes, and the number of edges. 
\end{proof} 

\begin{proof}[Proof of Proposition~\ref{prop:ubbc:diam2:eff}]
From Lemma~\ref{lem:ubbc:sc:eff} we know that the social cost of an efficient outcome for a uniform UBBC instance is given by Equation~\eqref{eq:ubbc:sc:eff}, and from Lemma~\ref{lem:ubbc:diam2:sc} we know that the social cost of any diameter-$2$ network with $m$ edges is given by Equation~\eqref{eq:lem:diam2:sc}. A uniform UBBC instance with $k$ edges per agent induces a network with $m = nk$ edges. Substituting for $m$ and rearranging shows that both Equations \eqref{eq:ubbc:sc:eff} and \eqref{eq:lem:diam2:sc} are equivalent. The requirement that $k < (n-1)/2$ is a consequence of the fact that when $k \geq (n-1)/2$, the efficient outcome is the complete graph, which has a diameter of $1$.
\end{proof} 

\begin{replemma}{lem:fab2003:star_inequality}
As the number of agents grows toward infinity, the maximal inequality in a star topology for the UC network formation game with a constant edge cost $1 \leq \alpha < \infty$ is $\max \{2, (1 + \alpha) / 2\}$.
\end{replemma}

Note that the notation for the following proof of Lemma~\ref{lem:fab2003:star_inequality} differs from that used in the proof sketch given in Section~\ref{sec:upper_bounds:uc}. This is because the following proof must address more than just the center- and peripheral-sponsored stars as was done in the sketch.

\begin{proof}
Let $s$ be a strategy profile that produces a star topology rooted at agent $c$. Suppose that $k \in \{0, 1, \dots, n-1\}$ is the number of edges that $c$ purchases; \ie $|s_c| = k$. When $k=0$, we get the \emph{peripheral-sponsored star}, and when $k = n-1$, we get the \emph{center-sponsored star}. If $k < n-1$, then $\exists j \in N$ such that $c_j(s) = 2n - 3 + \alpha$, meaning that $j$ had to purchase the edge $\{j, c\}$. Similarly, when $k > 0$, then $\exists i \in N$ such that $c_i(s) = 2n - 3$, meaning that $c$ covered the cost of the edge $\{c, i\}$.

Partition the set of agents $N$ into three sets: 
	\begin{itemize}
	\item $N_c = \{c\}$ is the singleton consisting of the central agent,
	\item $N_{b} = \{i : c \in s_i\} \subseteq N \setminus \{c\}$ is the set of agents who built an edge to $c$, and 
	\item $N_{\lnot b} = \{j : j \in s_c \} \subseteq N \setminus \{c\}$ is the set of ``free-loading'' agents who do not buy an edge to $c$. 
	\end{itemize}
Note that we have $|N_{\lnot b}| = k$ and $|N_b| = n - k - 1$. 
All agents within a particular part are cost-equivalent, so the cost to an agent $i$ given the joint strategy profile $s$ is
	\begin{equation*}
	c_i(s) = \left\{	\begin{array}{rcl}
					n - 1 + \alpha k 	& \mbox{if} & i \in N_c \\
					2n - 3 + \alpha 	& \mbox{if} & i \in N_b \\
					2n - 3 		& \mbox{if} & i \in N_{\lnot b}
					\end{array} \right.
	\end{equation*}
	
When $k > (n-2) / \alpha + 1$, agents in $N_c$ incur the highest cost and agents in $N_{\lnot b}$ incur the lowest cost. Therefore, the inequality ratio is $(n - 1 + \alpha k) / (2n - 3)$. This ratio is maximized when $k = n-1$ (\ie in the center-sponsored star), so the maximal inequality ratio between agents in $N_c$ and $N_{\lnot b}$ is $( \alpha + 1)(n - 1) / ( 2n - 3)$, giving us
	\begin{equation} 
	\label{eq:fab2003:star_inequality:4}
	\lim_{n \rightarrow \infty} \ir = \frac{1 + \alpha}{2}.
	\end{equation}
	
When $k < (n-2)/\alpha$, the central agent in $N_c$ incurs the lowest cost while agents in $N_b$ incur the highest cost. In this case the inequality ratio is $(2n - 3 + \alpha) / (n - 1 + \alpha k)$. This quantity is maximized when $k = 0$ (\ie in the peripheral-sponsored star), so the maximal inequality ratio between agents in $N_c$ and $N_b$ is simply $( 2n - 3 + \alpha ) / ( n - 1)$, giving us
	\begin{equation}
	\label{eq:fab2003:star_inequality:6}
	\lim_{n \rightarrow \infty} \ir = 2.
	\end{equation}

Hence, the largest inequality ratio for a star topology is the maximum between Equations~\eqref{eq:fab2003:star_inequality:4} and \eqref{eq:fab2003:star_inequality:6} as $n \rightarrow \infty$.
\end{proof} 

\begin{repfact}{fact:diam2regular}
A $\delta$-regular graph of diameter two can have at most $n = \delta^2 + 1$ nodes.
\end{repfact}

\begin{proof} 
This fact follows from the Moore bound (\cf the survey by Miller and Sir\'{a}n \cite{Miller2013}). For completeness, we summarize the relevant part of the presentation in \cite[Section~3.1]{Miller2013} regarding the Moore bound.

Consider a node $v$ in a $\delta$-regular graph $G$. Let $n_i$ denote the number of nodes at distance $i$ from $v$ in $G$. We can bound $n_i \leq (\delta - 1) n_{i-1}$. If $G$ has a diameter $D$ then 
	\begin{eqnarray*}
	n_D = \sum_{i = 0}^D n_i &\leq& 1 + \delta + \delta(\delta - 1) + \cdots + \delta (\delta - 1)^D \\
		&=& 1 + \delta (1 + (\delta - 1) + \cdots + (\delta - 1)^{D-1}) \\
		&=& { \left\{ 	\begin{array}{rl}
					1 + \delta \frac{(\delta - 1)^D - 1}{\delta - 2} & \text{if } \delta > 2 \\
					2D + 1 & \text{if } \delta = 2
				\end{array} \right. }
	\end{eqnarray*}
Hence, when the diameter is $D = 2$, there can be at most $\delta^2 + 1$ nodes.
\end{proof} 

\end{document}